\newif\ifreport\reporttrue
\theoremstyle{definition}
\newtheorem{definition}{Definition}[section]
\newtheorem{theorem}{Theorem}
\newtheorem{lemma}[theorem]{Lemma}
\begin{document}

\title{Age-Optimal Information Updates in Multihop Networks}

\author{\large Ahmed M. Bedewy$^\dagger$, Yin Sun$^\dagger$, and Ness B. Shroff$^\dagger$$^\ddagger$ \\ [.1in]
\large  \begin{tabular}{c} $^\dagger$Dept. of ECE, $^\ddagger$Dept. of CSE, The Ohio State University, Columbus, OH. \\
emails: \{bedewy.2, sun.745, shroff.11\}@osu.edu
\end{tabular} }
\maketitle

\begin{abstract}
The problem of reducing the age-of-information has been extensively studied in the single-hop networks. In this
paper, we minimize the age-of-information in general multihop networks. If the packet transmission times over the network links are exponentially distributed, we prove that a preemptive Last Generated First Served (LGFS) policy results in smaller age processes at all nodes of the network (in a stochastic ordering sense) than any other causal policy. In addition, for arbitrary general distributions of packet transmission times, the non-preemptive LGFS policy is shown to minimize the age processes at all nodes of the network among all  non-preemptive work-conserving policies (again in a stochastic ordering sense). It is surprising that such simple policies can achieve optimality of the joint distribution of the age processes at all nodes even under arbitrary network topologies, as well as arbitrary packet generation and arrival times. These optimality results not only hold for the age processes, but also for any non-decreasing functional of the age processes.

\end{abstract}
\section{Introduction}\label{Int}
There is a growing interest in applications that require real-time (fresh) information updates, such as news, weather reports, email notifications, stock quotes, social updates, mobile ads, etc. The freshness of the information is also crucial in other systems, e.g., monitoring systems that obtain information from environmental sensors, wireless systems that need rapid updates of channel state information, etc. 

To provide a precise metric of data freshness, the concept of \emph{age-of-information}, or simply \emph{age}, was defined in \cite{adelberg1995applying,cho2000synchronizing,golab2009scheduling,KaulYatesGruteser-Infocom2012}. At time $t$, if $U(t)$ is the time when the freshest update at the destination was generated, the age $\Delta(t)$  is defined as $\Delta(t)=t-U(t)$. Hence, age is the time elapsed since the freshest packet was generated.

There exists a number of studies that focused on reducing the age in a single-hop network \cite{KaulYatesGruteser-Infocom2012,2012ISIT-YatesKaul,2015ISITHuangModiano,CostaCodreanuEphremides2014ISIT,Icc2015Pappas,
2012CISS-KaulYatesGruteser,Gamma_dist, PAoI_in_error}. In \cite{KaulYatesGruteser-Infocom2012,2012ISIT-YatesKaul,2015ISITHuangModiano}, the update generation rate was optimized to improve data freshness in the class of First-Come First-Served (FCFS) update policies. In \cite{CostaCodreanuEphremides2014ISIT,Icc2015Pappas}, it was found that age can be improved by discarding old packets waiting in the queue if a new sample arrives. In \cite{2012CISS-KaulYatesGruteser,Gamma_dist}, the time-average age was characterized for Last-Come First-Served (LCFS) information-update networks with and without preemption with exponential and gamma service time distributions, respectively. The work in \cite{PAoI_in_error} complements the work in \cite{2012CISS-KaulYatesGruteser} by analyzing the average peak age in presence of error.

Age-optimal generation of update packets was studied for single-hop networks in \cite{BacinogCeranUysal_Biyikoglu2015ITA,2015ISITYates,generat_at_will}. In particular, a general class of non-negative, non-decreasing age penalty functions was minimized in \cite{generat_at_will}. 
In \cite{age_optimality_multi_server}, it was shown that for arbitrary packet generation times, arrival times, and queue buffer size, a preemptive Last Generated First Served (LGFS) policy simultaneously minimizes the age, throughput, and delay in multi-server single-hop networks with exponential service times. Motivated by recent research on age-of-information, a real-time sampling problem was solved in  \cite{Sun_reportISIT17}, where samples of a Wiener process are taken and forwarded to a remote estimator via a channel with random delay. A simple sampling policy was developed to minimize the mean square estimation error subject to a sampling-rate constraint. 



\begin{figure}
\includegraphics[scale=0.5]{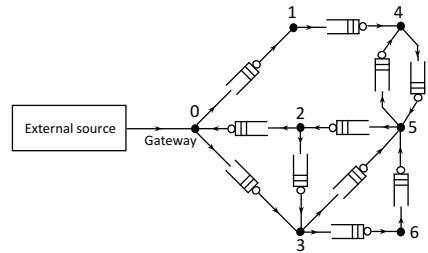}
\centering
\caption{ Information updates in a multihop network.}\label{Fig:sysMod}
\vspace{-0.3cm}
\end{figure}

In this paper, we consider a general multihop network, where the update packets are generated at an external source and are dispersed throughout the network through a gateway node, as shown in Fig. \ref{Fig:sysMod}. It is a step forward from \cite{KaulYatesGruteser-Infocom2012,2012ISIT-YatesKaul,2015ISITHuangModiano,CostaCodreanuEphremides2014ISIT,Icc2015Pappas,
2012CISS-KaulYatesGruteser,Gamma_dist, PAoI_in_error, BacinogCeranUysal_Biyikoglu2015ITA,2015ISITYates,generat_at_will, age_optimality_multi_server} by considering multihop networks. It is well known that delay-optimality is notoriously difficult in multihop networks, except for some special network settings (e.g., tandem networks) \cite{gupta2011delay,tassiulas1994dynamic}. This difficulty stems from the fact that packet scheduling decisions at each hop are influenced by decisions on other hops and vise versa. Surprisingly, it turns out that age minimization has very different features from delay minimization. In particular, we find that some simple policies can achieve optimality of the joint distribution of the age processes at all nodes even under arbitrary network topologies. The following summarizes our main contributions in this paper: 
\begin{itemize}
\item We consider a general scenario where the update packets do not necessarily  arrive to the gateway node in the order of their generation times. We prove that, if the packet transmission times over the network links are exponentially distributed, then for arbitrary arrival process, network topology, and buffer size at each link, the preemptive Last-Generated First-Served (LGFS) policy achieves smaller age processes at all nodes in the network  than any causal policy in the sense of stochastic ordering (Theorem \ref{thm1}). This further implies that the preemptive LGFS policy minimizes any non-decreasing functional of the age processes. Examples of non-decreasing age functional include the time-average age  \cite{KaulYatesGruteser-Infocom2012,2012ISIT-YatesKaul,CostaCodreanuEphremides2014ISIT,Icc2015Pappas,
2012CISS-KaulYatesGruteser,Gamma_dist,BacinogCeranUysal_Biyikoglu2015ITA,2015ISITYates}, average peak age \cite{2015ISITHuangModiano,CostaCodreanuEphremides2014ISIT,Gamma_dist,BacinogCeranUysal_Biyikoglu2015ITA,PAoI_in_error}, average age penalty \cite{generat_at_will}, etc. 
\item We then prove that, for arbitrary general distributions of packet transmission times, the non-preemptive LGFS policy minimizes the age processes at all nodes in the network among all non-preemptive work-conserving policies in the sense of stochastic ordering (Theorem \ref{thm2}). It is interesting to note that age-optimality here can be achieved even if the transmission time distribution differs from one link to another, i.e., the transmission time distributions are heterogeneous. 
\end{itemize}

\section{Model and Formulation}\label{sysmod}
\subsection{Notations and Definitions}
Throughout this paper, for any random variable $Z$ and an event $A$, let $[Z\vert A]$ denote a random variable with the conditional distribution of $Z$ for given $A$, and $\mathbb{E}[Z\vert A]$ denote the conditional expectation of $Z$ for given $A$.

Let $\mathbf{x}=(x_1,x_2,\ldots,x_n)$ and $\mathbf{y}=(y_1,y_2,\ldots,y_n)$ be two vectors in $\mathbb{R}^n$, then we denote $\mathbf{x}\leq\mathbf{y}$ if $x_i\leq y_i$ for $i=1,2,\ldots,n$. A set $U\subseteq \mathbb{R}^n$ is called upper if $\mathbf{y}\in U$ whenever $\mathbf{y}\geq\mathbf{x}$ and $\mathbf{x}\in U$. We will need the following definitions: 
\begin{definition} \textbf{ Univariate Stochastic Ordering:} \cite{shaked2007stochastic} Let $X$ and $Y$ be two random variables. Then, $X$ is said to be stochastically smaller than $Y$ (denoted as $X\leq_{\text{st}}Y$), if
\begin{equation*}
\begin{split}
\mathbb{P}\{X>x\}\leq \mathbb{P}\{Y>x\}, \quad \forall  x\in \mathbb{R}.
 \end{split}
\end{equation*}
\end{definition}
\begin{definition}\label{def_2} \textbf{Multivariate Stochastic Ordering:} \cite{shaked2007stochastic} 
Let $\mathbf{X}$ and $\mathbf{Y}$ be two random vectors. Then, $\mathbf{X}$ is said to be stochastically smaller than $\mathbf{Y}$ (denoted as $\mathbf{X}\leq_\text{st}\mathbf{Y}$), if
\begin{equation*}
\begin{split}
\mathbb{P}\{\mathbf{X}\in U\}\leq \mathbb{P}\{\mathbf{Y}\in U\}, \quad \text{for all upper sets} \quad U\subseteq \mathbb{R}^n.
 \end{split}
\end{equation*}
\end{definition}
\begin{definition} \textbf{ Stochastic Ordering of Stochastic Processes:} \cite{shaked2007stochastic} Let $\{X(t), t\in [0,\infty)\}$ and $\{Y(t), t\in[0,\infty)\}$ be two stochastic processes. Then, $\{X(t), t\in [0,\infty)\}$ is said to be stochastically smaller than $\{Y(t), t\in [0,\infty)\}$ (denoted by $\{X(t), t\in [0,\infty)\}\leq_\text{st}\{Y(t), t\in [0,\infty)\}$), if, for all choices of an integer $n$ and $t_1<t_2<\ldots<t_n$ in $[0,\infty)$, it holds that
\begin{align}\label{law9'}
\!\!\!(X(t_1),X(t_2),\ldots,X(t_n))\!\leq_\text{st}\!(Y(t_1),Y(t_2),\ldots,Y(t_n)),\!\!
\end{align}
where the multivariate stochastic ordering in \eqref{law9'} was defined in Definition \ref{def_2}.
\end{definition}

\subsection{Network Model}
We consider a general multihop network represented by a directed graph $\mathcal{G(\mathcal{V},\mathcal{L})}$ where $\mathcal{V}$ is the set of nodes and $\mathcal{L}$ is the set of links, as shown in Fig. \ref{Fig:sysMod}. The number of nodes in the network is $\vert\mathcal{V}\vert=N$. The update packets are generated at an external source, which is connected to the network through a gateway node $0$. The update packets are firstly forwarded to node 0, from which they are dispersed throughout the network. Let $(i, j)\in\mathcal{L}$ denote a link from node $i$ to node $j$, where $i$ is the origin node and $j$ is the destination node. Each link $(i,j)$ has a queue of buffer size $B_{ij}$ to store the incoming packets. The packet transmission time on each link $(i,j)$ is random.

%
%
%
%
\subsection{Scheduling Policy}\label{Schpolicy}
 The system starts to operate at time $t=0$. A sequence of $n$ update packets are generated at the external source, where $n$ can be an arbitrary finite or infinite number. The generation time of the $i$-th packet is $s_i$, such that $0\leq s_1\leq s_2\leq \ldots\leq s_n$. We let $\pi$ denote a scheduling policy that determines when to send the packets on each link and in which order. Define $a_{ij}$ as the arrival time of the $i$-th packet to node $j$. Then, $s_i \leq a_{i0} \leq a_{ij}$ for all $j = 1, \ldots, N-1$. The packet generation times $(s_1, s_2, \ldots, s_n)$ and packet arrival times $(a_{10}, a_{20}, \ldots, a_{n0})$ at node $0$  are arbitrary given, which are independent of the scheduling policy. Note that the update packets may arrive at node $0$ out of the order of their generation times. For example, packet $i+1$ may arrive at node 0 earlier than packet $i$ such that $s_i \leq s_{i+1}$ but $a_{i0} \geq a_{(i+1)0}$. 
 
Let $\Pi$  denote the set of all {causal} policies, in which scheduling decisions are made based on the history and current state of the system. We define several types of policies in $\Pi$:

A policy is said to be \textbf{preemptive}, if a link can switch to send another packet at any time; the preempted packets will be stored back into the queue if there is enough buffer space and then sent out at a later time when the link is available again. In contrast, in a \textbf{non-preemptive} policy, a link must complete sending the current packet before starting to send another packet.

A policy is said to be \textbf{work-conserving}, if each link is kept fully utilized when there are packets
waiting in the queue feeding this link.
\subsection{Performance Metric}
\begin{figure}
\includegraphics[scale=0.22]{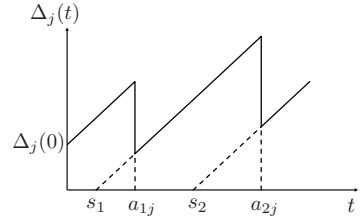}
\centering
\caption{Evolution of the age at node $\Delta_j(t)$ at node $j$.}\label{Fig:Age}
\vspace{-0.3cm}
\end{figure}
Let $U_{j}(t)=\max\{s_i : a_{ij}\leq t\}$ be the generation time of the freshest packet that has arrived at node $j$ before time $t$, where $U_j(0^-)$ is invariant of the policy $\pi\in\Pi$ for all $j\in\mathcal{V}$. The \emph{age-of-information}, or simply the \emph{age}, at node $j$ is defined as
\begin{equation}\label{age}
\begin{split}
\Delta_{j}(t)=t-U_{j}(t).
\end{split}
\end{equation} 
As shown in Fig. \ref{Fig:Age}, the age increases linearly with $t$ but is reset to a smaller value with the arrival of a fresher packet. The age vector of all the nodes in the network is $\mathbf{\Delta}(t)=\!\!(\Delta_{0}(t), \Delta_{1}(t), \ldots, \Delta_{N-1}(t))$. Then, the age process of all the nodes in the network is represented by $\mathbf{\Delta}=\{\mathbf{\Delta}(t), t\in [0,\infty)\}$. Let us define an age penalty functional $g(\mathbf{\Delta})$ to represent the level of ``dissatisfaction'' for data staleness in the network or the ``need'' for new information updates, where $g(\cdot)$ is a general non-decreasing functional: A functional $g$ is said to be non-decreasing, if 
\begin{equation}
\begin{split}
&g(\mathbf{\Delta}_1) \leq g(\bm{\Delta}_2),\\
&\text{whenever}\quad \Delta_{j,1}(t)\leq \Delta_{j,2}(t),\quad\!\! \forall t\in [0,\infty), \forall j\in\mathcal{V}. 
\end{split}
\end{equation}
Examples of the non-decreasing functional $g$ include:
\begin{itemize}
\item \emph{Time-average age \cite{KaulYatesGruteser-Infocom2012,2012ISIT-YatesKaul,CostaCodreanuEphremides2014ISIT,Icc2015Pappas,
2012CISS-KaulYatesGruteser,Gamma_dist,BacinogCeranUysal_Biyikoglu2015ITA,2015ISITYates}:} The time-average age of node $j$ is defined as
\begin{equation}\label{functional1}
g_1(\bm{\Delta})=\frac{1}{T}\int_{0}^{T} \Delta_{j}(t) dt,
\end{equation}
\item \emph{Average peak age \cite{2015ISITHuangModiano,CostaCodreanuEphremides2014ISIT,Gamma_dist,BacinogCeranUysal_Biyikoglu2015ITA,PAoI_in_error}:} The average peak age of node $j$ is defined as 
\begin{equation}\label{functional2}
g_2(\bm{\Delta})=\frac{1}{K}\sum_{k=1}^{K} A_{kj},
\end{equation}
where $A_{kj}$ denotes the $k$-th peak value of $\Delta_{j}(t)$ since time $t=0$. 
\item \emph{Average age penalty \cite{generat_at_will}:} The average age penalty of node $j$ is
\begin{equation}\label{functional3}
g_3(\bm{\Delta})= \frac{1}{T}\int_{0}^{T} h(\Delta_{j}(t)) dt,
\end{equation}
where $h$ : $[0,\infty)\to [0,\infty)$ can be any non-negative and non-decreasing function.
\end{itemize}

\section{Age-optimality Results}\label{GS}
In this section, we first consider that the packet transmission time is exponentially distributed and show that the  preemptive Last Generated First Served (prmp-LGFS) policy is age optimal among all policies in the policy space $\Pi$. Then, we show that age-optimality can also be achieved for general transmission time distributions when the policy space is restricted to non-preemptive work-conserving policies. 
\subsection{Exponential Transmission Time Distributions}
 \begin{algorithm}[h]
\SetKwData{NULL}{NULL}
\SetCommentSty{small} 
$\alpha_{ij}:=0$\; 
\While{the system is ON} {
\If{a new packet with generation time $s$ arrives to node $i$}{ 
\uIf{the link is busy}{
\uIf{ $s\leq\alpha_{ij}$}
{Store the packet in the queue\;}
\Else(~~~~~~\tcp*[h]{The packet carries fresh information.}){
Send the packet over the link by preempting the packet being transmitted\; 
The preempted packet is stored back to the queue\;
 $\alpha_{ij}=s$\;
}}
\Else(~~~~~~\tcp*[h]{The link is idle.})
{
The new packet is sent over the link\;
} 

}
\If{a packet is delivered}{
 \If{ the queue is not empty}{
The freshest packet in the queue is sent over the link\;
 }
}
}
\caption{The preemptive Last Generated First Served (LGFS) policy.}\label{alg1}
\end{algorithm}
We study the age-optimal packet scheduling when the packet transmission times are exponentially distributed, \emph{independent} across the links and \emph{i.i.d.} across time. We consider a LGFS scheduling principle in which the packet being transmitted at each link is generated the latest (i.e., the freshest) one among all packets in the queue; after transmission, the link starts to send the next freshest packet in its queue. We propose a preemptive LGFS policy at each  link $(i,j)\in\mathcal{L}$. The implementation details of this policy are depicted in Algorithm \ref{alg1}. Throughout Algorithm \ref{alg1}, we use $\alpha_{ij}$ to denote the generation time of the packet being transmitted on the link $(i,j)$. 


 Define a set of parameters $\mathcal{I}=\{n,(s_i, a_{i0})_{i=1}^{n},$ $\mathcal{G}(\mathcal{V}, \mathcal{L}), (B_{ij}, (i,j)\in\mathcal{L})\}$, where $n$ is the total number of packets, $s_i$ and $a_{i0}$ are the generation time and the arrival time of packet $i$ to node $0$, respectively, $\mathcal{G}(\mathcal{V}, \mathcal{L})$ is the network graph, and $B_{ij}$ is the queue buffer size of link $(i,j)$. Let $\{\mathbf{\Delta_{\pi}}(t), t\in [0,\infty)\}$ be the age processes of all nodes in the network under policy $\pi$. The age performance of preemptive LGFS policy is provided in the following theorem.
\begin{theorem}\label{thm1}
If the packet transmission times are exponentially distributed, \emph{independent} across links and \emph{i.i.d.} across time, then for all $\mathcal{I}$ and $\pi\in\Pi$ 
\begin{align}
\!\!\!\![\mathbf{\Delta_{\text{prmp-LGFS}}}\vert\mathcal{I}]
 \!\!\leq_{\text{st}}\!\! [\mathbf{\Delta_{\pi}}\vert\mathcal{I}],\!\!\!
\end{align}
or equivalently, for all $\mathcal{I}$ and non-decreasing functional $g$
 \begin{equation}\label{thm1eq2}
\begin{split}
\mathbb{E}[g(\bm{\Delta}_{\text{prmp-LGFS}})\vert\mathcal{I}]= \min_{\pi\in\Pi} \mathbb{E}[g(\bm{\Delta}_\pi)\vert\mathcal{I}], 
\end{split}
\end{equation}
provided the expectations exist.
\end{theorem}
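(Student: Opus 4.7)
The plan is to establish the theorem by a sample-path coupling argument that yields pointwise dominance of the LGFS age at every node, from which $\leq_\text{st}$ on the joint process, and hence \eqref{thm1eq2}, follows from the standard criterion (e.g.\ Theorem 6.B.30 of Shaked and Shanthikumar) that a coupling with pointwise pathwise dominance implies stochastic ordering of processes. Thus it suffices to construct, on a common probability space, coupled versions of the prmp-LGFS system and of the $\pi$ system, both driven by the same external arrival sequence $(s_i, a_{i0})_{i=1}^n$, such that $\Delta_j^{\text{prmp-LGFS}}(t) \leq \Delta_j^\pi(t)$ for every node $j\in\mathcal{V}$ and every time $t\geq 0$, almost surely. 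The functional inequality \eqref{thm1eq2} then drops out from the definition of $\leq_\text{st}$ applied to the non-decreasing $g$.

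The coupling I have in mind exploits memorylessness: for each link $(i,j)\in\mathcal{L}$, introduce an independent Poisson process of rate $\mu_{ij}$ of ``potential completion'' epochs. At each such epoch, whichever packet is currently being transmitted on $(i,j)$ (if any) is instantaneously delivered to node $j$, and otherwise the event has no effect. Because exponential service times are memoryless, this Poisson-clock model has the same joint distribution as the original one under every policy in $\Pi$, including preemptive ones whose restarted transmissions have exponential residual times with the same rate. With this coupling in place, I would establish by induction on the (almost-surely discrete) event times the following invariant: for every node $j$, every threshold $\tau\in\mathbb{R}$, and every time $t$,
\begin{equation*}
N_j^{\text{prmp-LGFS}}(t,\tau) \;\geq\; N_j^\pi(t,\tau),
\end{equation*}
where $N_j^\pi(t,\tau)$ counts the packets with generation time at least $\tau$ that, at time $t$ under policy $\pi$, sit at node $j$ itself or at some queue/link on a directed route that can still carry them to $j$. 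Setting $\tau=U_j^\pi(t)$ and noting that the set of packets already delivered to $j$ is contained in the state counted by $N_j^\pi(t,\cdot)$ yields $U_j^{\text{prmp-LGFS}}(t)\geq U_j^\pi(t)$, hence $\Delta_j^{\text{prmp-LGFS}}(t)\leq \Delta_j^\pi(t)$, which is the desired pointwise dominance.

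The hard part is verifying the invariant at every event, in full generality of arbitrary multihop topology, heterogeneous finite buffers $B_{ij}$, and arbitrary causal behaviour by $\pi$ (possibly non-work-conserving, possibly preempting in strange ways). External arrivals at node $0$ update both sides symmetrically. The nontrivial case is a potential-completion epoch on some link $(k,\ell)$: prmp-LGFS sends the freshest packet available at $k$, preempting if necessary and dropping the oldest overflow when the buffer on $(k,\ell)$ is full, while $\pi$ may send a less fresh packet, stay idle, or preempt a different packet; moreover the buffer-overflow rule can cause genuine packet losses that must be tracked. A careful case analysis is needed to show that LGFS's greedy choice together with its preempt/drop rules preserves the counting inequality simultaneously at every descendant $j$ of $\ell$. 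This is where the multihop structure really bites, since a single action on link $(k,\ell)$ updates $N_j(\cdot,\cdot)$ at many downstream nodes at once, and it is the step I expect to demand the bulk of the technical work; the reduction to a sample-path coupling and the conclusion via $\leq_\text{st}$ should then be essentially mechanical.
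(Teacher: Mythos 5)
Your coupling via Poisson ``potential completion'' clocks is a valid uniformization of the memoryless dynamics and is essentially the same coupling the paper uses (the paper couples actual deliveries across policies using memorylessness; your version even handles non-work-conserving $\pi$ uniformly, whereas the paper first restricts to work-conserving $\pi$ and then argues separately that idling only hurts). The reduction to a pathwise dominance and the invocation of Theorem~6.B.30 is also exactly the paper's route. Where you diverge is the choice of invariant, and that is where there is a genuine gap.

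Your invariant $N_j^{\text{prmp-LGFS}}(t,\tau)\geq N_j^{\pi}(t,\tau)$ counts packets with generation time $\geq\tau$ that are \emph{either already at $j$ or in transit somewhere upstream on a route to $j$}. The final step takes $\tau=U_j^{\pi}(t)$, observes $N_j^{\pi}(t,\tau)\geq 1$, and infers $N_j^{\text{prmp-LGFS}}(t,\tau)\geq 1$; but that last inequality only says LGFS has such a packet \emph{somewhere in the system}, not that one has already been \emph{delivered} to node $j$. So it does not yield $U_j^{\text{prmp-LGFS}}(t)\geq U_j^{\pi}(t)$, which is what you need. To close the implication you would have to refine the count to ``packets with generation time $\geq\tau$ already delivered to $j$'' and then re-prove preservation of that stronger, per-$\tau$, per-node inequality across every event, with LGFS's preempt-and-drop potentially discarding packets that $\pi$ retains; you flag this as ``the step I expect to demand the bulk of the technical work,'' and indeed you have not carried it out.

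The paper's invariant is strictly weaker and much easier to maintain: it tracks only the scalar $U_{j,\pi}(t)$ at each node (the maximum generation time of packets already delivered to $j$) and shows $\mathbf{U}_P(t)\geq\mathbf{U}_\pi(t)$ componentwise. The only nontrivial event is a coupled delivery on link $(i,j)$ where the LGFS packet is staler than the $\pi$ packet ($s_P<s_\pi$). The paper's key observation (Lemma~\ref{lem3}, Case~2) is that because LGFS always transmits the freshest available packet, this can only occur if every fresher packet at node $i$ has \emph{already} been delivered to $j$; since $U_{i,P}\geq U_{i,\pi}\geq s_\pi$, a packet generated no earlier than $s_\pi$ must already be at $j$, so $U_{j,P}\geq s_\pi$ and the inequality is preserved even though the current delivery is stale. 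This argument is local (per-node, per-scalar), needs no per-threshold bookkeeping, no tracking of copies on multiple routes, and sidesteps the in-transit versus delivered distinction entirely. You should replace the counting invariant with this maximum-generation-time invariant, or, if you want to keep a counting argument, restrict the count to delivered packets and supply the missing preservation proof.
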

\begin{proof}
 See Appendix~\ref{Appendix_A}.
\end{proof}

Theorem \ref{thm1} tells us that for arbitrary number $n$, packet generation times $(s_1, s_2, \ldots, s_n)$ and arrival times $(a_{10}, a_{20}, \ldots, a_{n0})$ at node $0$, network topology $\mathcal{G(V, L)}$, and buffer sizes $(B_{ij},(i,j)\in\mathcal{L})$, the prmp-LGFS policy can achieve optimality of the joint distribution of the age processes at all nodes in the network within the policy space $\Pi$. In addition, \eqref{thm1eq2} tells us that the prmp-LGFS policy minimizes any non-decreasing age penalty functional $g$, including the time-average age \eqref{functional1}, average peak age \eqref{functional2}, and average age penalty \eqref{functional3}.

\subsection{General Transmission Time Distributions}
Now, we study the age-optimal packet scheduling for \emph{arbitrary} general packet transmission time distributions which are \emph{independent} across the links and \emph{i.i.d.} across time. We consider the set of non-preemptive work-conserving policies, denoted by $\Pi_{npwc}\subset\Pi$. We propose a non-preemptive LGFS policy. The description of non-preemptive LGFS policy can be obtained from Algorithm \ref{alg1} by replacing Steps 5-11 by Step 6. We next show that the non-preemptive LGFS policy is age-optimal among the policies in $\Pi_{npwc}$.
\begin{theorem}\label{thm2}
If the packet transmission times are \emph{independent} across the links and \emph{i.i.d.} across time, then for all $\mathcal{I}$ and $\pi\in\Pi_{npwc}$ 
\begin{align}
\!\!\!\![\mathbf{\Delta_{\text{non-prmp-LGFS}}}\vert\mathcal{I}]
 \!\!\leq_{\text{st}}\!\! [\mathbf{\Delta_{\pi}}\vert\mathcal{I}],\!\!\!
\end{align}
or equivalently, for all $\mathcal{I}$ and non-decreasing functional $g$
 \begin{equation}
\begin{split}
\mathbb{E}[g(\bm{\Delta}_{\text{non-prmp-LGFS}})\vert\mathcal{I}]= \min_{\pi\in\Pi_{npwc}} \mathbb{E}[g(\bm{\Delta}_\pi)\vert\mathcal{I}], 
\end{split}
\end{equation}
provided the expectations exist.
\end{theorem}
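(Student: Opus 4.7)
The plan is to prove Theorem~\ref{thm2} by a coupling plus a sample-path comparison. I would work on a common probability space in which all $\pi\in\Pi_{npwc}$ share the exogenous inputs $(s_i),(a_{i0})$ and, for each link $(i,j)$, share a single i.i.d.\ sequence $X^{(i,j)}_1,X^{(i,j)}_2,\ldots$ of service durations, declaring that the $k$-th service completed on $(i,j)$ lasts $X^{(i,j)}_k$ time units under every policy. This is a legitimate coupling because service times are \emph{independent} across links and \emph{i.i.d.}\ across time. Under this coupling, it suffices to prove the pointwise inequality $U_j^{\text{non-prmp-LGFS}}(t)\geq U_j^\pi(t)$ for every $t$ and $j$: this yields $\Delta_j^{\text{non-prmp-LGFS}}(t)\leq\Delta_j^\pi(t)$, and Definition~\ref{def_2} together with the standard characterization of $\leq_{\text{st}}$ via couplings then upgrades the pointwise comparison to the multivariate stochastic order and to the minimization of every non-decreasing $g$.

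I would then establish the sample-path dominance by a hop-by-hop exchange built on a single-link lemma. The lemma says: if two non-preemptive work-conserving link-rules see the same arrival stream into link $(i,j)$ under the coupling, they produce identical busy periods and identical completion epochs on $(i,j)$ (both depend only on arrival times and the coupled $X^{(i,j)}_k$), differing only in which queued packet is transmitted at each epoch; since LGFS always picks the freshest, an induction over epochs shows that its running maximum of delivered generation times dominates. To lift this to the network, enumerate $\mathcal{L}$ and build a chain $\pi=\pi^{(0)},\pi^{(1)},\ldots,\pi^{(|\mathcal{L}|)}=\text{non-prmp-LGFS}$ in which $\pi^{(k)}$ and $\pi^{(k-1)}$ agree on every link except the $k$-th, where the rule is swapped to LGFS. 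Arrivals into the queue of the $k$-th link depend only on scheduling decisions at other nodes, so they are unchanged across the swap, the single-link lemma applies, and deliveries out of the $k$-th link occur at identical times under both policies but now carry pointwise fresher generation-time labels under $\pi^{(k)}$.

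The main obstacle is propagating these fresher labels through downstream links: although the swap preserves the \emph{times} of arrivals on every downstream queue (hence its busy periods and completion epochs remain unchanged), the same, possibly non-LGFS rule living there must still deliver packets whose running maximum of generation times dominates. I would handle this by a label-monotonicity induction over downstream completion epochs, analogous in form to the single-link lemma but with a weaker hypothesis on the rule: at every downstream epoch the maximum queued label under $\pi^{(k)}$ is at least that under $\pi^{(k-1)}$, so whatever label the rule picks, the cumulative maximum of delivered labels cannot decrease, and any fresh label not yet used remains in the downstream queue and is available at a later epoch. With this monotonicity secured, the chain of swaps telescopes to give $U_j^{\text{non-prmp-LGFS}}(t)\geq U_j^\pi(t)$ for all $j$ and $t$, which, by the reduction in the first paragraph, completes the proof.
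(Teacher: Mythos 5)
Your coupling and reduction to a sample-path comparison are sound and match the paper's use of the coupling characterization of $\leq_{\text{st}}$ (Theorem 6.B.30 of \cite{shaked2007stochastic}). Your single-link observation is also correct: with coupled service durations, any two non-preemptive work-conserving rules on a link with the same arrival times produce identical busy periods and completion epochs, so only the labels of delivered packets differ.

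The gap is in the hop-by-hop exchange argument. The step from $\pi^{(k-1)}$ to $\pi^{(k)}$ is \emph{not} monotone for the downstream age processes. Swapping link $k$ to LGFS changes which labels arrive downstream (at unchanged times), and a downstream link that is still running an arbitrary non-preemptive work-conserving rule can then deliver a \emph{worse} sequence of labels, so that $U_b^{\pi^{(k)}}(t) < U_b^{\pi^{(k-1)}}(t)$ at some $t$. Concretely, consider $0 \to a \to b$ with link $(a,b)$ using LCFS under both $\pi^{(k-1)}$ and $\pi^{(k)}$. Packets with generation times $1,2,5$ arrive at node $0$; the service times on $(0,a)$ are coupled so that deliveries to node $a$ happen at $t_1<t_2<t_3$, with only packet $1$ available at the first service start and both $2,5$ available at the second. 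Suppose $\pi^{(k-1)}$ delivers $1,2,5$ to node $a$ (so cumulative max $1,2,5$), while after the swap LGFS delivers $1,5,2$ (cumulative max $1,5,5$, so the invariant at node $a$ holds). Now let the first service at $(a,b)$ span $t_1$ to some $u_1>t_3$; at $u_1$ the queue holds the packets that arrived at $t_2$ and $t_3$, and LCFS serves the $t_3$-packet next. Under $\pi^{(k-1)}$ the $t_3$-packet has label $5$, so $U_b(u_2)=5$; under $\pi^{(k)}$ the $t_3$-packet has label $2$, so $U_b(u_2)=2$. Your ``label-monotonicity induction'' asserts that dominance of the maximum queued label forces dominance of the cumulative delivered max, but this is false when the rule can select by arrival order rather than by label. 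The missing ingredient is precisely what the paper's Lemma~\ref{lem3np} exploits: when the LGFS policy delivers a packet staler than the one $\pi$ delivers at the same link and time, the LGFS property at \emph{that} link guarantees all fresher packets were already delivered to the destination node, so the state still dominates. That property holds for the all-LGFS policy at every link simultaneously, which is why the paper compares non-prmp-LGFS directly against $\pi$ by a forward induction over arrival and delivery events, rather than through a chain of intermediate hybrid policies. Your chain contains intermediate policies with LGFS on some links and arbitrary rules on others, and the needed one-step improvement simply fails for those.

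A further (minor) concern: your claim that ``arrivals into the queue of the $k$-th link depend only on scheduling decisions at other nodes'' is delicate for directed graphs with cycles. The arrival and departure \emph{times} at every link are indeed policy-independent under your coupling (since busy periods depend only on arrival times and service durations), so this part can be salvaged, but it deserves an explicit fixed-point or induction argument rather than an appeal to enumeration order. The paper sidesteps this entirely by inducting on events in time, which requires no topological ordering of the links.
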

\begin{proof}
 \ifreport
See Appendix~\ref{Appendix_B}.
\else
The proof of Theorem \ref{thm2} is similar to that of Theorem \ref{thm1}. The difference is that the preemption is not allowed here. See our technical report for more details \cite{Technical_report}.\fi
\end{proof}

It is interesting to note from Theorem \ref{thm2} that, age-optimality can be achieved for arbitrary general transmission time distributions, even if the transmission time distribution differs from a link to another.

\section{Numerical Results}\label{Simulations}
\begin{figure}
\includegraphics[scale=0.5]{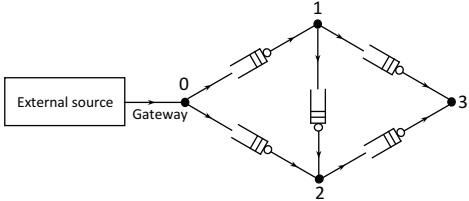}
\centering
\caption{ A  multihop network.}\label{Fig:simulation}
\vspace{-0.3cm}
\end{figure}
We present some numerical results to illustrate the age performance of different policies and validate the theoretical results. We consider the network in Fig. \ref{Fig:simulation}. The inter-generation times are \emph{i.i.d.} Erlang-2 distribution with mean $1/\lambda$. The time difference between packet generation and arrival to node $0$, i.e., $a_{i0}-s_i$, is modeled to be either  1 or 100, with equal probability. This means that the update packets may arrive to node $0$ out of order of their generation time. 



 
\begin{figure}[t]
\centering
\includegraphics[scale=0.3]{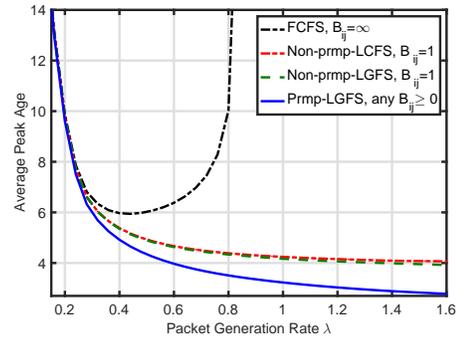}
\caption{Average peak age at node 2 versus packets generation rate $\lambda$ for exponential packet transmission times.}
\vspace{-.0in}
\label{avg_age1}
\end{figure}
Figure \ref{avg_age1} illustrates the average peak age at node 2 versus the packet generation rate $\lambda$ for the multihop network in Fig. \ref{Fig:simulation}. The packet transmission times are exponentially distributed with mean 1 at links $(0,1)$ and $(1,2)$, and mean 0.5 at link $(0,2)$. One can observe that the preemptive LGFS policy achieves a better (smaller) peak age at node $2$ than the non-preemptive LGFS policy, non-preemptive LCFS policy, and FCFS policy, where the buffer sizes are either 1 or infinity. It is important to emphasize that the peak age is minimized by preemptive LGFS policy for out of order packet receptions at node $0$, and general network topology. This numerical result shows agreement with Theorem \ref{thm1}.


\begin{figure}[t]
\centering
\includegraphics[scale=0.3]{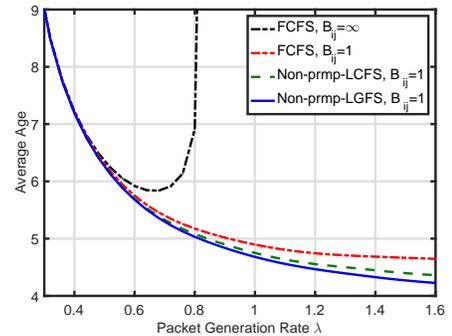}
\caption{Average age at node 3 versus packets generation rate $\lambda$ for general packet transmission time distributions.}
\vspace{-.0in}
\label{avg_age2}
\end{figure}

Figure \ref{avg_age2} plots  the time-average age at node 3 versus the packets generation rate $\lambda$ for the multihop network in Fig. \ref{Fig:simulation}. The plotted policies are FCFS policy, non-preemptive LCFS, and non-preemptive LGFS policy, where the buffer sizes are either 1 or infinity. The packet transmission times at links $(0,1)$ and $(1,3)$ follow a gamma distribution with mean 1. The packet transmission times at links $(0, 2)$, $(1, 2)$, and $(2, 3)$ are distributed as the sum of a constant with value 0.5 and a value drawn from an exponential distribution with mean 0.5. We find that the non-preemptive LGFS policy achieves the best age performance among all plotted policies. By comparing the age performance of the non-preemptive LGFS  and non-preemptive LCFS policies, we observe that the LGFS scheduling principle improves the age performance when the update packets arrive to node 0 out of the order of their generation times. It is important to note that the non-preemptive LGFS policy minimizes the age among the non-preemptive work-conserving policies even if the packet transmission time distributions are heterogeneous across the links. We also observe that the average age of FCFS policy with $B_{ij}=\infty$ blows up when the traffic intensity is high. This is due to the increased congestion in the network which leads to a delivery of stale packets. Moreover, in case of the FCFS policy with $B_{ij}=1$, the average age is finite at high traffic intensity, since the fresh packet has a better opportunity to be delivered in a relatively short period compared with FCFS policy with $B_{ij}=\infty$. This numerical result agrees with Theorem \ref{thm2}.

\section{Conclusion}\label{Concl}
In this paper, we made the first attempt to minimize the age-of-information in  general multihop networks. We showed that for general system settings including arbitrary network topology, packet generation times and arrival times to node 0, and queue buffer sizes, the age-optimality can be achieved. These optimality results not only hold for the age processes, but also for any non-decreasing functional of the age processes.



\appendices
\section{Proof of Theorem \ref{thm1}}\label{Appendix_A}
We need to define the system state of any policy $\pi$:

 \definition  At any time $t$, the system state of policy $\pi$ is specified by  $\mathbf{U}_\pi(t)=( U_{0,\pi}(t), U_{2,\pi}(t),\ldots,U_{N-1,\pi}(t)) $, where $U_{j,\pi}(t)$ is the generation time of the freshest packet that have already arrived to node $j$ by time $t$.
Let $\{\mathbf{U}_\pi(t), t\in[0,\infty)\}$ be the state process of policy $\pi$, which is assumed to be right-continuous. For notational simplicity, let policy $P$ represent the preemptive LGFS policy.

The key step in the proof of Theorem \ref{thm1} is the following lemma, where we compare policy $P$ with any work-conserving policy $\pi$.

 \begin{lemma}\label{lem2}
 Suppose that $\mathbf{U}_{P}(0^-)=\mathbf{U}_{\pi}(0^-)$ for all work conserving policies $\pi$, then for all $\mathcal{I}$,
\begin{equation}\label{law9}
\begin{split}
[\{\mathbf{U}_{P}(t),  t\in[0,\infty)\}\vert\mathcal{I}]\geq_{\text{st}}[\{\mathbf{U}_{\pi}(t), t\in[0,\infty)\}\vert\mathcal{I}].
 \end{split}
\end{equation}
\end{lemma}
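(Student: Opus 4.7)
My plan is to establish the lemma by a sample-path coupling combined with forward induction on the sequence of events in the coupled system, and then convert the resulting pathwise dominance into the stochastic order of the state processes by a standard criterion (e.g.\ \cite[Thm.~6.B.30]{shaked2007stochastic}).

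The coupling is set up as follows. Use identical packet generation times $s_i$ and external arrival times $a_{i0}$ at node $0$ in both systems. For every link $(i,j)\in\mathcal{L}$, introduce a single Poisson clock $N_{ij}$ whose rate equals the service rate $\mu_{ij}$ of that link, shared by both policies, and declare a transmission completion on $(i,j)$ at an epoch of $N_{ij}$ whenever the corresponding policy has link $(i,j)$ busy at that instant. Because the exponential distribution is memoryless, the inter-completion intervals produced this way within any busy period are i.i.d.\ exponential$(\mu_{ij})$, so the construction yields the correct joint distribution of service times under both $P$ and any work-conserving $\pi$, regardless of how $\pi$ preempts or reorders.

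I run forward induction on the ordered sequence of events in this coupled space, maintaining the invariant $U_{k,P}(t)\ge U_{k,\pi}(t)$ for every node $k\in\mathcal{V}$ and every $t\ge 0$. External arrivals at node $0$ apply the same ``$\max$ with $s$'' update to both systems and trivially preserve the invariant. At a point of $N_{ij}$, writing $\alpha^\sigma_{ij}(t)$ for the generation time of the packet in service on $(i,j)$ at time $t$ under policy $\sigma\in\{P,\pi\}$ (with the convention that the link contributes nothing when idle), the update is $U_{j,\sigma}(t^+)=\max(U_{j,\sigma}(t^-),\alpha^\sigma_{ij}(t^-))$. The delicate cases are those in which $P$ is idle on $(i,j)$ while $\pi$ is not, or both are busy but transmit different packets. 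In the first case, the preemptive LGFS rule forces every packet that has reached $i$ under $P$ to have already been forwarded to $j$, giving $U_{i,P}(t^-)\le U_{j,P}(t^-)$; combined with the invariant at $i$, this yields $U_{j,P}(t^-)\ge U_{i,\pi}(t^-)\ge \alpha^\pi_{ij}(t^-)$, which is enough to preserve the invariant at $j$. In the second case, if $U_{i,P}(t^-)>U_{j,P}(t^-)$ then preemptive LGFS guarantees $\alpha^P_{ij}(t^-)=U_{i,P}(t^-)\ge U_{i,\pi}(t^-)\ge \alpha^\pi_{ij}(t^-)$, whereas if $U_{i,P}(t^-)=U_{j,P}(t^-)$ the argument of the ``$P$-idle'' case applies verbatim. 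Other coordinates are unchanged.

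I expect the main obstacle to be the structural claim that idleness of a link under preemptive LGFS implies no fresher packet is stranded upstream of that link, since this fact is precisely what closes the subcases where $\pi$ effects a completion while $P$ does not. Once this structural property is secured the case analysis above closes the inductive step, and the coupling, which by construction gives the correct marginal distribution of $\mathbf{U}_P$ and $\mathbf{U}_\pi$, lifts the sample-path dominance to the claimed stochastic ordering of trajectories in \eqref{law9}.
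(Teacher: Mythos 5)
Your proposal follows essentially the same route as the paper's proof: a sample-path coupling that synchronizes link-completion epochs (justified by memorylessness of the exponential), a forward induction maintaining the pointwise invariant $\mathbf{U}_P(t)\ge \mathbf{U}_\pi(t)$, and a lift from pathwise dominance to stochastic ordering via Theorem 6.B.30 of \cite{shaked2007stochastic}, with the event-by-event case analysis split exactly as the paper does between link deliveries (Lemma \ref{lem3}) and external arrivals to node 0 (Lemma \ref{lem4}). The one place you go slightly beyond the paper's exposition is that you explicitly treat the sub-case in which $\pi$ completes a service on $(i,j)$ while $P$ is idle there; the paper's Lemma \ref{lem3} is stated only for the case where both policies have a delivery, so strictly speaking this $P$-idle sub-case is not covered by its hypotheses, but it is handled by precisely the same structural observation you flag (under preemptive LGFS, idleness of $(i,j)$ forces $U_{i,P}\le U_{j,P}$), which is the degenerate $s_P=-\infty$ instance of the paper's own Case~2 argument that ``$P$ sending a stale packet implies all fresher packets already delivered to $j$.'' So your worry about the remaining ``structural claim'' is unfounded: it is the same fact the paper already uses, and it follows because LGFS always keeps the freshest packet either in service or already delivered, never dropped or stranded in the buffer.
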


 We use coupling and forward induction to prove Lemma \ref{lem2}.
For any work-conserving policy $\pi$, suppose that stochastic processes $\widetilde{\mathbf{U}}_{P}(t)$ and $\widetilde{\mathbf{U}}_{\pi}(t)$ have the same distributions with $\mathbf{U}_{P}(t)$ and $\mathbf{U}_{\pi}(t)$, respectively. 
The state processes $\widetilde{\mathbf{U}}_{{P}}(t)$ and $\widetilde{\mathbf{U}}_{\pi}(t)$ are coupled in the following manner: If a packet is delivered from node $i$ to node $j$ through the link $(i,j)$ at time $t$ as $\widetilde{\mathbf{U}}_{{P}}(t)$ evolves in policy $P$,  then there exists a packet delivery from node $i$ to node $j$ through the link $(i,j)$ at time $t$ as $\widetilde{\mathbf{U}}_{\pi}(t)$ evolves in policy $\pi$. 
Such a coupling is valid since the transmission time is exponentially distributed and thus memoryless. Moreover, policy ${P}$ and policy $\pi$ have identical packet generation times $(s_1, s_2, \ldots, s_n)$ and packet arrival times $(a_{10}, a_{20}, \ldots, a_{n0})$ to node 0. According to Theorem 6.B.30 in \cite{shaked2007stochastic}, if we can show 
\begin{equation}\label{main_eq}
\begin{split}
\mathbb{P}[\widetilde{\mathbf{U}}_{P}(t)\geq\widetilde{\mathbf{U}}_{\pi}(t), t\in[0,\infty)\vert\mathcal{I}]=1,
\end{split}
\end{equation}
then \eqref{law9} is proven. 

To ease the notational burden, we will omit the tildes in this proof on the coupled versions and just use $\mathbf{U}_{P}(t)$ and $\mathbf{U}_{\pi}(t)$. Next, we use the following lemmas to prove \eqref{main_eq}:

\begin{lemma}\label{lem3}
Suppose that under policy $P$, $\mathbf{U'}_{P}$ is obtained by a packet delivery at the link $(i,j)$ in the system whose state is $\mathbf{U}_{P}$. Further, suppose that under policy $\pi$, $\mathbf{U'}_{\pi}$ is obtained by a packet delivery at the link $(i,j)$ in the system whose state is $\mathbf{U}_\pi$. If
\begin{equation}\label{hyp1}
 \mathbf{U}_{P} \geq \mathbf{U}_\pi,
\end{equation}
then,
\begin{equation}\label{law6}
\mathbf{U'}_{P} \geq \mathbf{U'}_{\pi}.
\end{equation}
\end{lemma}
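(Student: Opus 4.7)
The plan is to verify the componentwise inequality $\mathbf{U}'_P \geq \mathbf{U}'_\pi$ by treating the two kinds of coordinates separately. A delivery on link $(i,j)$ only modifies the $j$-th coordinate of the state vector, since nothing else changes at the moment of delivery. Hence for every $k\neq j$, I would write $U'_{k,P}=U_{k,P}\geq U_{k,\pi}=U'_{k,\pi}$, which follows immediately from the hypothesis $\mathbf{U}_{P}\geq \mathbf{U}_\pi$.

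The substantive case is coordinate $j$. The delivery rule at node $j$ is $U'_{j,\cdot}=\max(U_{j,\cdot},g_{\cdot})$, where $g_\cdot$ denotes the generation time of the packet being delivered on link $(i,j)$ under the policy in question. For an arbitrary causal policy $\pi$, a packet can only be sent from node $i$ after it has arrived at node $i$, so $g_\pi\leq U_{i,\pi}$. For the preemptive LGFS policy $P$, I would use Algorithm~\ref{alg1} to argue the stronger identity $g_P=U_{i,P}$: the variable $\alpha_{ij}$ equals the generation time of the packet currently being transmitted on $(i,j)$, and the preemption rule guarantees that $\alpha_{ij}$ is updated to any fresher packet that arrives at node $i$ before the current transmission completes. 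Consequently the packet whose delivery triggers the state update has generation time equal to the freshest packet ever to reach node $i$, i.e., $U_{i,P}$.

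Combining the two observations, I would conclude
\begin{equation*}
U'_{j,P}=\max(U_{j,P},U_{i,P})\geq \max(U_{j,\pi},U_{i,\pi})\geq \max(U_{j,\pi},g_\pi)=U'_{j,\pi},
\end{equation*}
where the first inequality uses $U_{j,P}\geq U_{j,\pi}$ and $U_{i,P}\geq U_{i,\pi}$ from the hypothesis, and the second uses $g_\pi\leq U_{i,\pi}$. Together with the trivial case $k\neq j$, this yields $\mathbf{U}'_P\geq \mathbf{U}'_\pi$.

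The main obstacle, and essentially the only nontrivial step, is the identification $g_P=U_{i,P}$ at the instant of delivery. I would justify it by a short auxiliary remark: when the link becomes idle and work-conservation picks up a packet, LGFS selects the freshest queued at node $i$, so $\alpha_{ij}$ equals $U_{i,P}$ at that moment; subsequent preemptions by even fresher arrivals keep the invariant $\alpha_{ij}=U_{i,P}$ throughout the busy period, since the only way $U_{i,P}$ can grow is by such an arrival, which immediately updates $\alpha_{ij}$. Hence at the delivery instant, $g_P=\alpha_{ij}=U_{i,P}$, which is precisely the property that makes the induction step go through.
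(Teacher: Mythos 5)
Your proof hinges on the claim that at the instant of delivery, the packet being delivered under policy $P$ satisfies $g_P = U_{i,P}$, i.e., that $\alpha_{ij}$ always equals the generation time of the freshest packet ever received at node $i$. This claim is false, and the gap matters. Under LGFS, after the freshest packet has been delivered across link $(i,j)$, the link goes back and serves older packets still waiting in the queue. At that point the packet being transmitted has generation time strictly less than $U_{i,P}$, so $\alpha_{ij} < U_{i,P}$. Your auxiliary remark that ``LGFS selects the freshest queued at node $i$, so $\alpha_{ij}$ equals $U_{i,P}$'' conflates \emph{freshest still in the queue} with \emph{freshest ever received}; these differ precisely once a fresher packet has already been forwarded to node $j$. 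As a result, your first displayed equality $U'_{j,P} = \max(U_{j,P}, U_{i,P})$ does not hold, and the chain of inequalities breaks.

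The correct argument (which the paper gives) splits into two cases on the delivered generation times $s_P$ and $s_\pi$. When $s_P \geq s_\pi$, the conclusion follows directly from $U_{j,P}\geq U_{j,\pi}$ and the max formula. When $s_P < s_\pi$, one cannot compare the delivered generation times directly; instead one shows that because $P$ is forced to deliver a stale packet, every packet at node $i$ with generation time greater than $s_P$ must already have reached node $j$ under $P$. Since $s_\pi \leq U_{i,\pi} \leq U_{i,P}$, node $i$ under $P$ has received a packet generated no earlier than $s_\pi > s_P$, and by the above that packet is already at node $j$, giving $U_{j,P}\geq s_\pi$. Combined with $U_{j,P}\geq U_{j,\pi}$ this yields $U'_{j,P}=\max(U_{j,P},s_P)\geq \max(U_{j,\pi},s_\pi)=U'_{j,\pi}$. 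Your treatment of the $k\neq j$ coordinates and of $g_\pi\leq U_{i,\pi}$ is fine; the missing idea is exactly this case analysis for when $P$ is delivering a stale packet.
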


\begin{proof}\ifreport
Let $s_{P}$ and $s_\pi$ denote the generation times of the packets that are delivered over the link $(i,j)$ under policy $P$ and policy $\pi$, respectively. From the definition of the system state, we can deduce that
\begin{equation}\label{Def1}
\begin{split}
U_{j,P}'&=\max\{U_{j,P},s_{P}\},\\
U_{j,\pi}'&=\max\{U_{j,\pi},s_{\pi}\}.
\end{split}
\end{equation}
Hence, we have two cases:

Case 1: If $s_{P}\geq s_\pi$. From \eqref{hyp1}, we have
\begin{equation}\label{pfp21}
U_{j,P}\geq U_{j,\pi}.
\end{equation}
Also, $s_{P}\geq s_\pi$, together with  \eqref{Def1} and \eqref{pfp21} imply
\begin{equation}
U'_{j,P}\geq U'_{j,\pi}.
\end{equation}
Since there is no packet delivery under other links, we get
\begin{equation}
U'_{k,P}=U_{k,P}\geq U_{k,\pi}=U'_{k,\pi}, \quad \forall k\neq j.
\end{equation}
Hence, we have 
\begin{equation}
\mathbf{U'}_{P} \geq \mathbf{U'}_{\pi}.
\end{equation}

Case 2: If $s_{P}<s_\pi$. By the definition of the system state, $s_{P} \leq U_{i,P}$ and $s_\pi \leq U_{i,\pi}$. Then, using $U_{i, P}\geq U_{i,\pi}$, we obtain
\begin{equation}
s_{P} < s_\pi \leq U_{i,\pi} \leq U_{i, P}.
\end{equation}
Because $s_{P}<U_{i, P}$, policy $P$ is sending a stale packet on link $(i,j)$. By the definition of policy $P$, this happens only when all packets generated after $s_P$ in the queue of the link $(i,j)$ have been delivered to node $j$. Since $s_\pi \leq U_{i, P}$, node $i$ has already received a packet (say packet $w$) generated no earlier than $s_\pi$ in policy $P$. Because $s_{P}<s_\pi$, packet $w$ is generated after $s_{P}$. Hence, packet $w$ must have been delivered to node $j$ in policy $P$ such that
\begin{equation}\label{eqprmp1}
 s_\pi  \leq U_{j, P}.
\end{equation}
Also, from \eqref{hyp1}, we have
\begin{equation}\label{eqprmp2}
 U_{j,\pi} \leq U_{j, P}.
\end{equation}
Combining \eqref{eqprmp1} and \eqref{eqprmp2} with \eqref{Def1}, we obtain
\begin{equation}
U'_{j,P}\geq U'_{j,\pi}.
\end{equation}
Since there is no packet delivery under other links, we get
\begin{equation}
U'_{k,P}=U_{k,P}\geq U_{k,\pi}=U'_{k,\pi}, \quad \forall k\neq j.
\end{equation}
Hence, we have 
\begin{equation}
\mathbf{U'}_{P} \geq \mathbf{U'}_{\pi},
\end{equation}
%
%
%
which complete the proof. 
\else
See our technical report \cite{Technical_report}.\fi
\end{proof}

\begin{lemma}\label{lem4}
Suppose that under policy $P$, $\mathbf{U'}_{P}$ is obtained by the arrival of a new packet to node $0$ in the system whose state is $\mathbf{U}_{P}$. Further, suppose that under policy $\pi$, $\mathbf{U'}_{\pi}$ is obtained by the arrival of a new packet to node $0$ in the system whose state is $\mathbf{U}_\pi$. If
\begin{equation}\label{hyp2}
 \mathbf{U}_{P} \geq \mathbf{U}_\pi,
\end{equation}
then,
\begin{equation}
\mathbf{U'}_{P} \geq \mathbf{U'}_{\pi}.
\end{equation}
\end{lemma}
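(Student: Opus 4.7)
The plan is to observe that a packet arrival to node $0$ affects only the component $U_0$ of the system state, leaving all other components $U_k$ ($k \neq 0$) unchanged in both systems. Since the packet generation times $(s_1, s_2, \ldots, s_n)$ and arrival times $(a_{10}, a_{20}, \ldots, a_{n0})$ to node $0$ are invariant across policies (they are determined exogenously by $\mathcal{I}$), the newly arriving packet carries the same generation time $s$ in both the $P$-system and the $\pi$-system under the coupling.

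I would then apply the update rule for $U_0$: by the definition of the system state (analogous to \eqref{Def1}), we have
\begin{equation*}
U'_{0,P} = \max\{U_{0,P}, s\}, \qquad U'_{0,\pi} = \max\{U_{0,\pi}, s\}.
\end{equation*}
The hypothesis \eqref{hyp2} gives $U_{0,P} \geq U_{0,\pi}$. Because $\max\{\cdot, s\}$ is non-decreasing in its first argument, this immediately yields $U'_{0,P} \geq U'_{0,\pi}$. For every other node $k \neq 0$, the arrival event does not change the state, so $U'_{k,P} = U_{k,P} \geq U_{k,\pi} = U'_{k,\pi}$ by \eqref{hyp2}. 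Combining componentwise gives $\mathbf{U}'_P \geq \mathbf{U}'_\pi$.

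There is essentially no obstacle here; this lemma is the ``trivial'' counterpart to the more delicate Lemma \ref{lem3}. The only subtlety worth stating explicitly is that the coupling used in the proof of Lemma \ref{lem2} forces the arriving packet to have identical generation time in the two systems (since generation and arrival processes at node $0$ are exogenous), which is exactly what makes the $\max$ step go through componentwise. No use of the preemptive structure of $P$, memorylessness, or the network topology is needed for this step.
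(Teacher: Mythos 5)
Your proof is correct and follows essentially the same argument as the paper: both apply the update rule $U'_{0} = \max\{U_{0}, s\}$ with the common generation time $s$ of the arriving packet, use monotonicity of $\max$ together with the hypothesis $\mathbf{U}_P \geq \mathbf{U}_\pi$ to conclude $U'_{0,P} \geq U'_{0,\pi}$, and note that all other components are unchanged. Your added remark that the coupling and the exogeneity of the arrival process guarantee the same $s$ in both systems is implicit in the paper but worth making explicit.
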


\begin{proof}\ifreport
Let $s$ denote the generation time of the new arrived packet. From the definition of the system state, we can deduce that
\begin{equation}\label{Def2}
\begin{split}
U_{0,P}'&=\max\{U_{0,P},s\},\\
U_{0,\pi}'&=\max\{U_{0,\pi},s\}.
\end{split}
\end{equation}
Combining this with \eqref{hyp2}, we obtain
\begin{equation}
U'_{0,P}\geq U'_{0,\pi}.
\end{equation}
Since there is no packet delivery under other links, we get
\begin{equation}
U'_{k,P}=U_{k,P}\geq U_{k,\pi}=U'_{k,\pi}, \quad \forall k\neq 0.
\end{equation}
Hence, we have 
\begin{equation}
\mathbf{U'}_{P} \geq \mathbf{U'}_{\pi},
\end{equation}
which complete the proof.
\else
See our technical report \cite{Technical_report}.\fi 
\end{proof}

\begin{proof}[ Proof of Lemma \ref{lem2}]
For any sample path, we have that $\mathbf{U}_{P}(0^-) = \mathbf{U}_{\pi}(0^-)$. This, together with Lemma \ref{lem3} and Lemma \ref{lem4},  implies that  
\begin{equation}
\begin{split}
[\mathbf{U}_{P}(t)\vert\mathcal{I}] \geq [\mathbf{U}_{\pi}(t)\vert\mathcal{I}],\nonumber
\end{split}
\end{equation}
holds for all $t\in[0,\infty)$. Hence, \eqref{main_eq} holds which implies \eqref{law9} by Theorem 6.B.30 in \cite{shaked2007stochastic}.
This completes the proof.
\end{proof}

%
%

\begin{proof}[Proof of Theorem \ref{thm1}]
As a result of Lemma \ref{lem2}, we have
\begin{equation*}
\begin{split}
[\{\mathbf{U}_{P}(t),  t\in[0,\infty)\}\vert\mathcal{I}]\geq_{\text{st}} [\{\mathbf{U}_{\pi}(t), t\in[0,\infty)\}\vert\mathcal{I}],
 \end{split}
\end{equation*}
holds for all work-conserving policies $\pi$, which implies
\begin{equation*}
\begin{split}
[\{\mathbf{\Delta}_{P}(t), t\in[0,\infty)\}\vert\mathcal{I}]\!\!\leq_{\text{st}} \!\![\{\mathbf{\Delta}_{\pi}(t), t\in[0,\infty)\}\vert\mathcal{I}],
 \end{split}
\end{equation*}
holds for all work-conserving policies $\pi$.

Finally, transmission idling only postpones the delivery of fresh packets. Therefore, the age under non-work-conserving policies will be greater. As a result, we have 
\begin{equation*}
\begin{split}
[\{\mathbf{\Delta}_{P}(t), t\in[0,\infty)\}\vert\mathcal{I}]\!\!\leq_{\text{st}} \!\![\{\mathbf{\Delta}_{\pi}(t), t\in[0,\infty)\}\vert\mathcal{I}],
 \end{split}
\end{equation*}
holds for all $\pi\in\Pi$. This completes the proof.
\end{proof}

\ifreport
\section{Proof of Theorem \ref{thm2}}\label{Appendix_B}
This proof is similar to that of Theorem \ref{thm1}. The difference between this proof and the proof of Theorem \ref{thm1} is that policy $\pi$ cannot be a preemptive policy here. We will use the same definition of the system state of policy $\pi$ used in Theorem \ref{thm1}. For notational simplicity, let policy $NP$ represent the non-preemptive LGFS policy.

The key step in the proof of Theorem \ref{thm2} is the following lemma, where we compare policy $NP$ with an arbitrary policy $\pi\in\Pi_{npwc}$.
 \begin{lemma}\label{lem2np}
 Suppose that $\mathbf{U}_{NP}(0^-)=\mathbf{U}_{\pi}(0^-)$ for all $\pi\in\Pi_{npwc}$, then for all $\mathcal{I}$,
\begin{equation}\label{law9np}
\begin{split}
[\{\mathbf{U}_{NP}(t),  t\in[0,\infty)\}\vert\mathcal{I}]\!\geq_{\text{st}}\! [\{\mathbf{U}_{\pi}(t), t\in[0,\infty)\}\vert\mathcal{I}].
 \end{split}
\end{equation}
\end{lemma}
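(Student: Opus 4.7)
The plan is to mirror the coupling-plus-forward-induction argument that established Lemma~\ref{lem2}. Let $NP$ denote the non-preemptive LGFS policy and let $\pi \in \Pi_{npwc}$ be arbitrary. Since neither policy preempts, every initiated transmission consumes exactly one fresh service time, so I can couple the two systems by assigning a common i.i.d.\ sequence $(X_{ij,k})_{k\ge 1}$ to link $(i,j)$ and declaring the $k$-th transmission initiated on this link to last $X_{ij,k}$ in both $NP$ and $\pi$. The coupling preserves the marginal distributions, so by Theorem~6.B.30 of \cite{shaked2007stochastic} it suffices to show pathwise that $\mathbf{U}_{NP}(t) \ge \mathbf{U}_{\pi}(t)$ for every $t$, at which point \eqref{law9np} follows as in the proof of Lemma~\ref{lem2}.

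I would then prove two event-level analogues of Lemmas \ref{lem3} and \ref{lem4}. The arrival lemma carries over verbatim: a packet arriving at node $0$ replaces $U_{0,\cdot}$ with $\max\{U_{0,\cdot},s\}$ in both systems, so the invariant is preserved. The delivery lemma must be reworked, because for general service times the completion instants are not synchronized across the two policies (the memoryless property exploited in Lemma~\ref{lem3} is gone). I would therefore process events in merged chronological order, and at each instant $t$ at which one of the two systems completes a transmission I update only that system's state and verify that $\mathbf{U}_{NP} \ge \mathbf{U}_{\pi}$ still holds.

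A completion under $NP$ is easy since it can only raise components of $\mathbf{U}_{NP}$. A completion under $\pi$ on link $(i,j)$ is the genuinely new step. If the packet delivered by $\pi$ has generation time $s_\pi$, I must show $U_{j,NP}(t) \ge s_\pi$. Let $\tau < t$ be the instant at which $\pi$ began this transmission; then $s_\pi \le U_{i,\pi}(\tau)$, and the induction hypothesis at time $\tau$ gives $U_{i,NP}(\tau) \ge s_\pi$, i.e.\ node $i$ in $NP$ had by then already seen a packet fresh enough. I then invoke the non-preemptive LGFS rule on link $(i,j)$ under $NP$: at the start of each of its services on this link, $NP$ picks the freshest packet in node $i$'s buffer, so either an $NP$-completion on $(i,j)$ between $\tau$ and $t$ has already pushed such a fresh packet to node $j$, or the current $NP$-service on $(i,j)$, whose started packet is at least as fresh as $s_\pi$, will deliver one upon completion. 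A case split paralleling Cases~1 and~2 of the proof of Lemma~\ref{lem3} then yields the desired inequality.

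The main obstacle is precisely this asynchrony: whereas Lemma~\ref{lem3} could couple deliveries to happen at identical instants, I must now compare a $\pi$-delivery at time $t$ against the entire history of $NP$'s activity on link $(i,j)$ since time $\tau$, and argue that the non-preemptive LGFS rule has already propagated enough freshness; the fact that $NP$ cannot preempt means it may be ``stuck'' transmitting a relatively stale packet, and care is needed to show this does not invert the domination. Once this event-level lemma is in place, the forward induction along the merged event stream closes the proof of Lemma~\ref{lem2np}, and Theorem~\ref{thm2} then follows from it exactly as Theorem~\ref{thm1} did from Lemma~\ref{lem2}, omitting the final non-work-conserving extension since $\Pi_{npwc}$ is work-conserving by definition.
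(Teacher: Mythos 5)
Your coupling --- common i.i.d.\ service times $(X_{ij,k})_{k\ge 1}$ assigned by order of transmission initiation on each link --- is in effect the paper's coupling, but your premise that ``the completion instants are not synchronized across the two policies'' is false under it, and that misconception is what pushes you into an asynchronous merged-event argument and then leaves a gap. For two non-preemptive, work-conserving policies driven by the same service-time realizations, the busy/idle pattern and queue occupancy of every link evolve identically: the server state and queue length at a link depend only on arrival and completion \emph{counts}, not on packet identities; both policies see the same external arrivals at node $0$; and a forward induction over time then yields the same delivery instants at every node and the same service-initiation and completion times on every link (buffer overflows change the occupancy count identically under both disciplines). This synchronization is what replaces memorylessness in the non-preemptive setting, and it is precisely what the paper's Lemma~\ref{lem3np} presumes when it refers to ``the transmission starting time of the delivered packets over the link $(i,j)$ is denoted by $\tau$ under both policies.''

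The step you flag as ``needing care'' is therefore a genuine hole in your argument as written, not a technicality. Your dichotomy for the $\pi$-completion on $(i,j)$ at time $t$ omits the case in which $NP$'s ongoing service on $(i,j)$ started at some $\tau_{NP}<\tau$ on a packet older than $s_\pi$ and has not finished by time $t$; in that case neither branch establishes $U_{j,NP}(t)\geq s_\pi$ \emph{at time $t$}, and saying that $NP$ ``will deliver one upon completion'' is a statement about the future, not what the induction step requires now. Under the correct coupling this case cannot occur --- if $NP$'s service on $(i,j)$ straddles $t$, then so does $\pi$'s, contradicting the assumed $\pi$-completion at $t$ --- but your proposal never proves this, and without the synchronization observation the argument does not close. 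Once synchronization is in hand, the merged-event process collapses to the paper's: at each common completion instant $\nu$ on $(i,j)$ with common start time $\tau$, compare the delivered generation times $s_{NP}$ and $s_\pi$; the nontrivial case $s_{NP}<s_\pi$ is settled by combining the induction hypothesis at the time $s_\pi$'s packet reached node $i$ under $\pi$ with the LGFS rule and the monotonicity of $U_{j,NP}(\cdot)$ to conclude $s_\pi\leq U_{j,NP}(\nu^-)$, exactly as in the paper's Case~2.
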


 We use coupling and forward induction to prove Lemma \ref{lem2np}.
For any work-conserving policy $\pi$, suppose that stochastic processes $\widetilde{\mathbf{U}}_{NP}(t)$ and $\widetilde{\mathbf{U}}_{\pi}(t)$ have the same distributions with $\mathbf{U}_{NP}(t)$ and $\mathbf{U}_{\pi}(t)$, respectively. 
The state processes $\widetilde{\mathbf{U}}_{{NP}}(t)$ and $\widetilde{\mathbf{U}}_{\pi}(t)$ are coupled in the following manner: If a packet is delivered from node $i$ to node $j$ through the link $(i,j)$ at time $t$ as $\widetilde{\mathbf{U}}_{{NP}}(t)$ evolves in policy prmp-LGFS,  then there exists a packet delivery from node $i$ to node $j$ through the link $(i,j)$ at time $t$ as $\widetilde{\mathbf{U}}_{\pi}(t)$ evolves in policy $\pi$.
Such a coupling is valid since the transmission time distribution at each link is identical under all policies. Moreover, policy $\pi$ can not be either preemptive or non-work-conserving policy, and both policies have the same packets generation times $(s_1, s_2, \ldots, s_n)$ and arrival times $(a_{10}, a_{20}, \ldots, a_{n0})$ to node 0. According to Theorem 6.B.30 in \cite{shaked2007stochastic}, if we can show 
\begin{equation}\label{main_eqnp}
\begin{split}
\mathbb{P}[\widetilde{\mathbf{U}}_{NP}(t)\geq\widetilde{\mathbf{U}}_{\pi}(t), t\in[0,\infty)\vert\mathcal{I}]=1,
\end{split}
\end{equation}
then \eqref{law9np} is proven.  

To ease the notational burden, we will omit the tildes henceforth on the coupled versions and just use $\mathbf{U}_{NP}(t)$ and $\mathbf{U}_{\pi}(t)$.

Next, we use the following lemmas to prove \eqref{main_eqnp}:

\begin{lemma}\label{lem3np}
Suppose that under policy $NP$, $\mathbf{U}_{NP}(\nu)$ is obtained by a packet delivery at the link $(i,j)$ at time $\nu$ in the system whose state is $\mathbf{U}_{NP}(\nu^{-})$. Further, suppose that under policy $\pi$, $\mathbf{U}_{\pi}(\nu)$ is obtained by a packet delivery at the link $(i,j)$ at time $\nu$ in the system whose state is $\mathbf{U}_\pi(\nu^{-})$. If
\begin{equation}\label{hyp1np}
\begin{split}
\mathbf{U}_{NP}(t)& \geq \mathbf{U}_\pi(t),
\end{split}
\end{equation}
holds for all $t\in [0, \nu^{-}]$, then
\begin{equation}\label{law6np}
\mathbf{U}_{NP}(\nu) \geq \mathbf{U}_{\pi}(\nu).
\end{equation}
\end{lemma}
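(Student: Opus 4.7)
The plan is to mirror the structure of the proof of Lemma~\ref{lem3}, replacing the preemption-based step by a look-back argument that uses the non-preemptive LGFS rule together with the stronger hypothesis that $\mathbf{U}_{NP}(t)\geq\mathbf{U}_\pi(t)$ holds throughout the past interval $[0,\nu^-]$. Let $s_{NP}$ and $s_\pi$ denote the generation times of the packets delivered across link $(i,j)$ at time $\nu$ under $NP$ and $\pi$ respectively. The state updates are
\begin{equation*}
U_{j,NP}(\nu)=\max\{U_{j,NP}(\nu^{-}),s_{NP}\},\quad U_{j,\pi}(\nu)=\max\{U_{j,\pi}(\nu^{-}),s_\pi\},
\end{equation*}
while the coordinates $U_{k,\cdot}$ for $k\neq j$ are unchanged at time $\nu$, so the required inequality is inherited directly from the hypothesis at $\nu^-$ for those coordinates. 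It therefore suffices to establish $U_{j,NP}(\nu)\geq U_{j,\pi}(\nu)$, which I would split into two cases by comparing $s_{NP}$ and $s_\pi$.

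The case $s_{NP}\geq s_\pi$ is exactly Case~1 of Lemma~\ref{lem3}: combining the hypothesis $U_{j,NP}(\nu^{-})\geq U_{j,\pi}(\nu^{-})$ with $s_{NP}\geq s_\pi$ and taking maxima yields the claim immediately. The substantive case is $s_{NP}<s_\pi$. Let $\tau\leq\nu$ be the instant at which $NP$ started transmitting $s_{NP}$ on $(i,j)$; because neither policy preempts and the coupling matches per-link service durations across policies, $\pi$ also began its current transmission on $(i,j)$ at time $\tau$, so the packet with generation time $s_\pi$ was already at node $i$ under $\pi$ at time $\tau$, giving $U_{i,\pi}(\tau)\geq s_\pi$. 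Applying the hypothesis at $\tau\in[0,\nu^-]$ then gives $U_{i,NP}(\tau)\geq U_{i,\pi}(\tau)\geq s_\pi$, so some packet $w$ with generation time $s_w\geq s_\pi>s_{NP}$ has already reached node $i$ under $NP$ by time $\tau$.

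Because $NP$ is non-preemptive LGFS, $s_{NP}$ is the freshest generation time in the queue of $(i,j)$ at time $\tau$; since $s_w>s_{NP}$, packet $w$ cannot be in that queue at $\tau$. The LGFS rule never evicts a fresh packet to admit a stale one, so the only way $w$ could have left the queue is by being transmitted across $(i,j)$ to node $j$ at some earlier time $\leq\tau$. Hence $U_{j,NP}(\tau)\geq s_w\geq s_\pi$, and by monotonicity of $U_{j,NP}(\cdot)$ in time, $U_{j,NP}(\nu)\geq s_\pi$. Combining with $U_{j,NP}(\nu)\geq U_{j,NP}(\nu^{-})\geq U_{j,\pi}(\nu^{-})$ gives $U_{j,NP}(\nu)\geq\max\{U_{j,\pi}(\nu^{-}),s_\pi\}=U_{j,\pi}(\nu)$, closing the induction step.

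The main obstacle I expect is the rigorous justification of the ``packet $w$ was dispatched across $(i,j)$ by time $\tau$'' step in the presence of finite per-link buffers $B_{ij}$, since in principle $w$ could have been displaced from the queue by a buffer overflow. Under LGFS, however, such a displacement can only be caused by a still-fresher packet, and iterating this eviction argument produces a strictly increasing chain of generation times that must terminate in finitely many steps at some packet of generation time $\geq s_\pi$ that is actually delivered on $(i,j)$ before $\tau$. I would formalise this by carrying, alongside the main induction, an auxiliary invariant asserting that for every link $(i,j)$ and every time $t$ the freshest packet ever enqueued on $(i,j)$ up to time $t$ has either been delivered across $(i,j)$ by $t$ or is currently present in its queue; this invariant is preserved both by packet arrivals (LGFS retains the fresher packet upon overflow) and by transmission completions (LGFS selects the freshest packet remaining), and it is exactly what is needed to conclude Case~2 above.
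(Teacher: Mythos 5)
Your proof is correct and follows essentially the same route as the paper's: the same case split on $s_{NP}$ versus $s_\pi$, the observation that the shared transmission start time $\tau$ lies in $[0,\nu^-]$ so the throughout-the-past hypothesis can be applied there, and the argument that a packet fresher than $s_{NP}$ already present at node $i$ under $NP$ must have crossed link $(i,j)$ by $\tau$, forcing $U_{j,NP}(\tau)\geq s_\pi$. The only differences are cosmetic and one small tightening: the paper routes the first inequality through $a_\pi$ (the arrival time of $s_\pi$ at node $i$ under $\pi$) rather than applying the hypothesis directly at $\tau$, and it silently assumes packets leave a link queue only by delivery, whereas your auxiliary invariant explicitly rules out loss of packet $w$ to buffer overflow when $B_{ij}<\infty$, which is a genuine (if minor) strengthening of a step the paper leaves implicit.
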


\begin{proof}
Let $s_{NP}$ and $s_\pi$ denote the packet indexes and the generation times of the delivered packets over the link $(i,j)$ at time $\nu$ under policy $NP$ and policy $\pi$, respectively.  From the definition of the system state, we can deduce that
\begin{equation}\label{Def1np}
\begin{split}
U_{j,NP}(\nu)&=\max\{U_{j,NP}(\nu^{-}),s_{NP}\},\\
U_{j,\pi}(\nu)&=\max\{U_{j,\pi}(\nu^{-}),s_{\pi}\}.
\end{split}
\end{equation}
Hence, we have two cases:

Case 1: If $s_{NP}\geq s_\pi$. From \eqref{hyp1np}, we have
\begin{equation}\label{pf21}
U_{j,NP}(\nu^{-})\geq U_{j,\pi}(\nu^{-}).
\end{equation}
By $s_{NP}\geq s_\pi$, \eqref{Def1np}, and \eqref{pf21}, we have
\begin{equation}
U_{j,NP}(\nu)\geq U_{j,\pi}(\nu).
\end{equation}
Since there is no packet delivery under other links, we get
\begin{equation}
\begin{split}
\!\!\!\!U_{k,NP}(\nu)&=U_{k,NP}(\nu^{-})\\&\geq U_{k,\pi}(\nu^{-})=U_{k,\pi}(\nu), \quad\forall k\neq j.
\end{split}
\end{equation}
Hence, we have 
\begin{equation}
\mathbf{U}_{NP}(\nu) \geq \mathbf{U}_{\pi}(\nu).
\end{equation}


Case 2: If $s_{NP}<s_\pi$. Let $a_\pi$ represent the arrival time of packet $s_\pi$ to node $i$ under policy $\pi$. The transmission starting time of the delivered packets over the link $(i,j)$ is denoted by $\tau$ under both policies. Apparently, $a_\pi\leq\tau\leq\nu^{-}$. Since  packet $s_\pi$ arrived to node $i$  at time $a_\pi$ in policy $\pi$, we get
\begin{equation}\label{pf23}
s_\pi\leq U_{i,\pi}(a_\pi).
\end{equation}
From \eqref{hyp1np}, we obtain 
\begin{equation}\label{pf24}
U_{i,\pi}(a_\pi)\leq U_{i,NP}(a_\pi).
\end{equation}
Combining \eqref{pf23} and \eqref{pf24}, yields
\begin{equation}\label{pf25}
s_\pi\leq U_{i,NP}(a_\pi).
\end{equation}
Hence, in policy $NP$, node $i$ has a packet with generation time no smaller than $s_\pi$ by the time $a_\pi$. Because the $U_{i,NP}(t)$ is a non-decreasing function of $t$ and $a_\pi\leq \tau$, we have
\begin{equation}\label{pf26}
U_{i,NP}(a_\pi)\leq U_{i,NP}(\tau).
\end{equation}
Then, \eqref{pf25} and \eqref{pf26} imply
\begin{equation}\label{pf26'}
s_\pi\leq U_{i,NP}(\tau).
\end{equation}
Since $s_{NP} < s_\pi$, \eqref{pf26'} tells us
\begin{equation}\label{pf26''}
s_{NP} < U_{i,NP}(\tau),
\end{equation}
and hence policy $NP$ is sending a stale packet on link $(i,j)$. By the definition of policy $NP$, this happens only when all packets generated after $s_{NP}$ in the queue of the link $(i,j)$ have been delivered to node $j$ by time $\tau$. In addition, \eqref{pf26'} tells us that by time $\tau$, node $i$ has already received a packet (say packet $h$) generated no earlier than $s_\pi$ in policy $NP$. By $s_{NP} < s_\pi$, packet $h$ is generated after $s_{NP}$. Hence, packet $h$ must have been delivered to node $j$ by time $\tau$ in policy $NP$ such that 
\begin{equation}\label{pf27}
s_\pi\leq U_{j,NP}(\tau).
\end{equation}
Because the $U_{j,NP}(t)$ is a non-decreasing function of $t$, and $\tau \leq \nu^{-}$, \eqref{pf27} implies
\begin{equation}\label{pf29}
s_\pi\leq U_{j,NP}(\nu^{-}).
\end{equation}
Also, from \eqref{hyp1np}, we have
\begin{equation}\label{pf230}
U_{j,\pi}(\nu^{-})\leq U_{j,NP}(\nu^{-}).
\end{equation}
Combining \eqref{pf29} and \eqref{pf230} with \eqref{Def1np}, we obtain
\begin{equation}
U_{j,NP}(\nu)\geq U_{j,\pi}(\nu).
\end{equation}
Since there is no packet delivery under other links, we get
\begin{equation}
\begin{split}
U_{k,NP}(\nu)&=U_{k,NP}(\nu^{-})\\&\geq U_{k,\pi}(\nu^{-})=U_{k,\pi}(\nu), \quad \forall k\neq j.
\end{split}
\end{equation}
Hence, we have 
\begin{equation}
\mathbf{U}_{NP}(\nu) \geq \mathbf{U}_{\pi}(\nu),
\end{equation}
which complete the proof.
\end{proof}

\begin{lemma}\label{lem4np}
Suppose that under policy $NP$, $\mathbf{U'}_{NP}$ is obtained by the arrival of a new packet to node $0$ in the system whose state is $\mathbf{U}_{NP}$. Further, suppose that under policy $\pi$, $\mathbf{U'}_{\pi}$ is obtained by the arrival of a new packet to node $0$ in the system whose state is $\mathbf{U}_\pi$. If
\begin{equation}\label{hyp2np}
 \mathbf{U}_{NP} \geq \mathbf{U}_\pi,
\end{equation}
then,
\begin{equation}
\mathbf{U'}_{NP} \geq \mathbf{U'}_{\pi}.
\end{equation}
\end{lemma}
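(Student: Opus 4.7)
The plan is to mirror the proof of Lemma \ref{lem4} from the preemptive case, since a new arrival at node $0$ affects only the node-$0$ component of the state and the generation time $s$ of the arriving packet is identical under the coupling between policy $NP$ and policy $\pi$. Nothing about preemption or work-conservation is used to handle an arrival event; the argument is purely an update-rule computation.

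First I would invoke the definition of the system state to write the post-arrival update explicitly: only $U_{0,\cdot}$ changes, and it changes according to
\begin{equation*}
U'_{0,NP}=\max\{U_{0,NP},s\}, \qquad U'_{0,\pi}=\max\{U_{0,\pi},s\}.
\end{equation*}
Here $s$ is the generation time of the newly arrived packet, which is common to both coupled systems because the packet generation and arrival times $(s_i,a_{i0})$ are part of the shared input $\mathcal{I}$.

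Next I would combine the hypothesis \eqref{hyp2np}, namely $U_{0,NP}\geq U_{0,\pi}$, with the monotonicity of the $\max$ operation to conclude $U'_{0,NP}\geq U'_{0,\pi}$. For every other node $k\neq 0$, the arrival event at node $0$ induces no state change, so $U'_{k,NP}=U_{k,NP}\geq U_{k,\pi}=U'_{k,\pi}$, again by \eqref{hyp2np}. Assembling these coordinate-wise inequalities yields $\mathbf{U'}_{NP}\geq\mathbf{U'}_{\pi}$, which is exactly the desired conclusion.

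There is essentially no obstacle here: the lemma is a one-line consequence of the update rule, because the arrival event is not entangled with the non-preemptive service discipline in any way. The real content of the induction step for Theorem \ref{thm2} lives entirely in Lemma \ref{lem3np} (the delivery event), where non-preemption forces one to track the transmission starting time $\tau$ and argue via monotonicity of $U_{i,NP}(t)$ in $t$; by the time the arrival step is reached, only the trivial max-monotonicity argument above is needed.
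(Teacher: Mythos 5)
Your proof is correct and is exactly what the paper intends: the paper declines to write out the argument, stating only that it is ``similar to that of Lemma \ref{lem4},'' and your max-monotonicity argument is precisely the content of that earlier proof transcribed to the non-preemptive setting. Your closing observation that the arrival event is disentangled from the service discipline, so the real work lives in Lemma \ref{lem3np}, correctly explains why the paper felt safe omitting this proof.
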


\begin{proof}
The proof of Lemma \ref{lem4np} is similar to that of Lemma \ref{lem4}, and hence is not provided.
\end{proof}

\begin{proof}[ Proof of Lemma \ref{lem2np}]
For any sample path, we have that $\mathbf{U}_{NP}(0^-) = \mathbf{U}_{\pi}(0^-)$. This, together with Lemma \ref{lem3np} and Lemma \ref{lem4np},  implies that  
\begin{equation}
\begin{split}
[\mathbf{U}_{NP}(t)\vert\mathcal{I}] \geq [\mathbf{U}_{\pi}(t)\vert\mathcal{I}],\nonumber
\end{split}
\end{equation}
holds for all $t\in[0,\infty)$. Hence, \eqref{main_eqnp} holds which implies \eqref{law9np} by Theorem 6.B.30 in \cite{shaked2007stochastic}.
This completes the proof.
\end{proof}

\begin{proof}[Proof of Theorem \ref{thm2}]
As a result of Lemma \ref{lem2np}, we have
\begin{equation*}
\begin{split}
[\{\mathbf{U}_{NP}(t),  t\in[0,\infty)\}\vert\mathcal{I}]\geq_{\text{st}} [\{\mathbf{U}_{\pi}(t), t\in[0,\infty)\}\vert\mathcal{I}],
 \end{split}
\end{equation*}
holds for all $\pi\in\Pi_{npwc}$, which implies
\begin{equation*}
\begin{split}
[\{\mathbf{\Delta}_{NP}(t), t\in[0,\infty)\}\vert\mathcal{I}]\!\!\leq_{\text{st}} \!\![\{\mathbf{\Delta}_{\pi}(t), t\in[0,\infty)\}\vert\mathcal{I}],
 \end{split}
\end{equation*}
holds for all $\pi\in\Pi_{npwc}$. This completes the proof. 
\end{proof}
\fi

\bibliographystyle{IEEEbib}
\bibliography{MyLib}
\end{document}